\newcommand{\xpt}{\edef\f@size{\@xpt}\rm}
\newcommand{\ttbf}{\ttfamily\bfseries}
\newcommand{\bm}{\mathbf}
\DeclareMathOperator*{\minimize}{\text{minimize}}
\DeclareMathOperator*{\maximize}{\text{maximize}}
\DeclareMathOperator*{\argmin}{arg\,min}
\DeclareMathOperator{\st}{\text{subject to}\ }
\DeclareMathOperator{\tr}{\text{tr}}
\renewcommand*\env@matrix[1][c]{\hskip -\arraycolsep
  \let\@ifnextchar\new@ifnextchar
  \array{*\c@MaxMatrixCols #1}}
  \newenvironment{knitrout}{}{} 
  \definecolor{fgcolor}{rgb}{0.345, 0.345, 0.345}
  \def\maxwidth{ %
    \ifdim\Gin@nat@width>\linewidth
    \linewidth
    \else
    \Gin@nat@width
    \fi
  }
\newcommand{\con}{\mbox{\scriptsize con}}
\newcommand{\la}{\langle}
\newcommand{\ra}{\rangle}
\newcommand{\D}{_\text{\scriptsize D}}
\newcommand{\X}{_\text{\scriptsize X}}
\newtheorem{theorem}{Theorem}[section]
\newcommand{\RR}{I\!\!R} 
\title{Optimization Methods for Sparse Pseudo-Likelihood Graphical Model Selection}
\author{
Sang-Yun Oh \\
Computational Research Division\\
Lawrence Berkeley National Lab \\
\texttt{syoh@lbl.gov}
\And
Onkar Dalal \\
Stanford University \\
\texttt{onkar@alumni.stanford.edu} \\
\AND
Kshitij Khare \\
Department of Statistics \\
University of Florida \\
\texttt{kdkhare@stat.ufl.edu} \\
\And
Bala Rajaratnam \\
Department of Statistics \\
Stanford University \\
\texttt{brajarat@stanford.edu} \\
}
\begin{document}

\maketitle

\begin{abstract}
  Sparse high dimensional graphical model selection is a popular topic
  in contemporary machine learning. To this end, various useful
  approaches have been proposed in the context of $\ell_1$-penalized
  estimation in the Gaussian framework. Though many of these inverse
  covariance estimation approaches are demonstrably scalable and have
  leveraged recent advances in convex optimization, they still depend
  on the Gaussian functional form. To address this gap, a convex
  pseudo-likelihood based partial correlation graph estimation method
  (CONCORD) has been recently proposed. This method uses 
  coordinate-wise minimization of a regression based
  pseudo-likelihood, and has been shown to have robust model selection
  properties in comparison with the Gaussian approach. In direct
  contrast to the parallel work in the Gaussian setting however, this
  new convex pseudo-likelihood framework has not leveraged the
  extensive array of methods that have been proposed in the machine
  learning literature for convex optimization. In this paper, we
  address this crucial gap by proposing two proximal gradient methods
  (CONCORD-ISTA and CONCORD-FISTA) for performing $\ell_1$-regularized
  inverse covariance matrix estimation in the pseudo-likelihood
  framework. We present timing comparisons with coordinate-wise
  minimization and demonstrate that our approach yields tremendous
  payoffs for $\ell_1$-penalized partial correlation graph estimation
  outside the Gaussian setting, thus yielding the fastest and most
  scalable approach for such problems. We undertake a theoretical
  analysis of our approach and rigorously demonstrate convergence, and
  also derive rates thereof.
\end{abstract}

\section{Introduction}

\subsection{Background} 
Sparse inverse covariance estimation has received tremendous attention
in the machine learning, statistics and optimization
communities.  These sparse models, popularly known as
graphical models, have widespread use in various applications,
especially in high dimensional settings. The most popular inverse
covariance estimation framework is arguably the $\ell_1$-penalized
Gaussian likelihood optimization framework as given by
\begin{align*}
  \minimize_{\Omega\in\mathbf{S}^p_{++}} &\quad -\log\det \Omega + \tr(S\Omega) + \lambda\|\Omega\|_1
\end{align*}
where $\mathbf{S}^p_{++}$ denotes the space of $p$-dimensional
positive definite matrices, and $\ell_1$-penalty is imposed on the
elements of $\Omega=(\omega_{ij})_{1\leq i\leq j\leq p}$ by the term
$\|\Omega\|_1=\sum_{i,j}|\omega_{ij}|$ along with the scaling factor
$\lambda>0$. The matrix $S$ denotes the sample covariance matrix of
the data ${\bf Y}\in\RR^{n\times p}$. As the $\ell_1$-penalized
log likelihood is convex, the problem becomes more tractable and has
benefited from advances in convex optimization. Recent efforts in the
literature on Gaussian graphical models therefore have focused on
developing principled methods which are increasingly more and more
scalable. The literature on this topic is simply enormous and for the
sake of brevity, space constraints and the topic of this paper, we
avoid an extensive literature review by referring to the references in
the seminal work of \cite{Banerjee2008} and the very recent work of
\cite{Dalal2014}. These two papers contain references to recent work,
including past NIPS conference proceedings.

\subsection{The CONCORD method}

Despite their tremendous contributions, one shortcoming of the
traditional approaches to $\ell_1$-penalized likelihood maximization
is the restriction to the Gaussian  assumption. To
address this gap, a number of $\ell_1$-penalized pseudo-likelihood
approaches have been proposed: SPACE \cite{Peng2009} and SPLICE
\cite{Rocha2008}, SYMLASSO \cite{Friedman2010}. These approaches are
either not convex, and/or convergence of corresponding maximization
algorithms are not established. In this sense, non-Gaussian partial
correlation graph estimation methods have lagged severely behind,
despite the tremendous need to move beyond the Gaussian framework for
obvious practical reasons. In very recent work, a convex
pseudo-likelihood approach with good model selection properties called
CONCORD \cite{Khare2013} was proposed. The CONCORD algorithm minimizes
\begin{align}
  Q_{\con} (\Omega) &= -\sum_{i=1}^p n\log \omega_{ii} + \frac{1}{2}
  \sum_{i=1}^p \| \omega_{ii} {\bf Y}_i + \sum_{j \neq i} \omega_{ij}
  {\bf Y}_j \|_2^2 + n\lambda \sum_{1 \leq i < j \leq p} |\omega_{ij}|
  \label{eq:concord regression form}
\end{align}
via cyclic coordinate-wise descent that alternates between updating
off-diagonal elements and diagonal elements. It is straightforward to
show that operators $T_{ij}$ for updating $(\omega_{ij})_{1\leq
  i<j\leq p}$ (holding $(\omega_{ii})_{1\leq i \leq p}$ constant) and
$T_{ii}$ for updating $(\omega_{ii})_{1\leq i \leq p}$ (holding
$(\omega_{ij})_{1\leq i<j\leq p}$ constant) are given by
\begin{align}
  \left(T_{ij} (\Omega)\right)_{ij} &=
  \frac{S_{\lambda}\left(-\left(\sum_{j' \neq j} \omega_{ij'} s_{jj'}
        + \sum_{i' \neq i} \omega_{i'j}
        s_{ii'}\right)\right)} {s_{ii} + s_{jj}}\\
  \left( T_{ii} (\Omega) \right)_{ii} &= \frac{- \sum_{j \neq i}
    \omega_{ij} s_{ij} + \sqrt{\left( \sum_{j \neq i} \omega_{ij}
        s_{ij} \right)^2 + 4 s_{ii}}}{2 s_{ii}}.
\end{align}
This coordinate-wise algorithm is shown to converge to a global minima
though no rate is given \citep{Khare2013}. Note that the equivalent
problem assuming a Gaussian likelihood has seen much development in
the last ten years, but a parallel development for the recently
introduced CONCORD framework is lacking for obvious reasons. We
address this important gap by proposing state-of-the-art proximal
gradient techniques to minimize $Q_{\con}$. A rigorous theoretical
analysis of the pseudo-likelihood framework and the associated
proximal gradient methods which are proposed is undertaken. We
establish rates of convergence and also demonstrate that our approach
can lead to massive computational speed-ups, thus yielding extremely
fast and principled solvers for the sparse inverse covariance
estimation problem outside the Gaussian setting.


\section{CONCORD using proximal gradient methods}

The penalized matrix version the CONCORD objective function in
\eqref{eq:concord regression form} is given by
\begin{align}
  Q_{\con}(\Omega) = \frac{n}{2}\left[ -\log | \Omega_D^2 | +
    \tr(\mathbf{S}\Omega^2) + \lambda \|\Omega\X\|_1
  \right].\label{eq:matrix concord}
\end{align}
where $\Omega\D$ and $\Omega\X$ denote the diagonal and off-diagonal
elements of $\Omega$. We will use the notation $A = A\D + A\X$ to
split any matrix $A$ into its diagonal and off-diagonal terms.

This section proposes a scalable and thorough approach to solving the
CONCORD objective function using recent advances in convex
optimization and derives rates of convergence for such algorithms. In
particular, we use proximal gradient-based methods to achieve this
goal and demonstrate the efficacy of such methods for the non-Gaussian
graphical modeling problem. First, we propose CONCORD-ISTA and
CONCORD-FISTA in section \ref{proximalgradient}: methods which are
inspired by the iterative soft-thresholding algorithms in
\cite{beck2009fast}. We undertake a comprehensive treatment of the
CONCORD optimization problem by also investigating the dual of the
CONCORD problem. Other popular methods in the literature, including
the potential use of alternating minimization algorithm and the second
order proximal Newton’s method CONCORD-PNOPT, are considered in
Supplemental section \ref{othermethods}.


\subsection{Iterative Soft Thresholding Algorithms: CONCORD-ISTA, CONCORD-FISTA}
\label{proximalgradient}
The iterative soft-thresholding algorithms (ISTA) have recently gained
popularity after the seminal paper by Beck and Teboulle
\cite{beck2009fast}. The ISTA methods are based on the
Forward-Backward Splitting method from \cite{rockafellar1976monotone}
and Nesterov's accelerated gradient methods \cite{nesterov1983method}
using soft-thresholding as the proximal operator for the
$\ell_1$-norm. The essence of the proximal gradient algorithms is to
divide the objective function into a smooth part and a non-smooth
part, then take a proximal step (w.r.t. the non-smooth part) in the
negative gradient direction of the smooth part. Nesterov's accelerated
gradient extension \cite{nesterov1983method} uses a combination of
gradient and momentum steps to achieve accelerated rates of
convergence. In this section, we apply these methods in the context of
CONCORD which also has a composite objective function.

The matrix CONCORD objective function \eqref{eq:matrix concord} can be split into a smooth part $h_1(\Omega)$ and a non-smooth part $h_2(\Omega)$:
\begin{align}
  h_1(\Omega) &= -\log\det{\Omega\D} +
  \frac{1}{2}\tr(\Omega S \Omega),\,\,\,\,\,
  h_2(\Omega) = \lambda \|\Omega\X\|_{1}.
\end{align}
The gradient and hessian of the smooth function $h_1$ are given by
\begin{align}
 \nabla{h_1}(\Omega) &= {\Omega_D}^{-1} + \frac{1}{2} \left( S \Omega^{T} + \Omega S \right),\nonumber \\ 
\nabla^{2}{h_1}(\Omega) &= \sum_{i=1}^{i=p} {\omega_{ii}^{-2} \left[{e_i}{e_i}^{T} \otimes {e_i}{e_i}^{T}\right]} + \frac{1}{2} \left( S \otimes I + I \otimes S \right),\label{hessian} 
\end{align}
where $e_i$ is a column vector of zeros except for a one in the $i$-th
position.
\begin{figure}
  \centering
  \begin{minipage}[t]{0.495\textwidth}
\begin{algorithm}[H]
  \caption{CONCORD-ISTA}
  \label{concordmatrix:fbs}

  \begin{algorithmic}

    \STATE Input: sample covariance matrix $S$, penalty matrix
    $\Lambda$

    \STATE Initialize: ${\Omega}^{(0)} \in \mathbb{S}^{p}_{+}$,
    $\tau_{(0,0)} \leq 1$,\\
    \qquad $c<1$, $\Delta_{\texttt{subg}} =
    2\epsilon_{\texttt{subg}}$.


    \WHILE{$\Delta_{\texttt{subg}} > \epsilon_{\texttt{subg}}$
      \textbf{or} $\Delta_{\texttt{func}} > \epsilon_{\texttt{func}}$}
    \STATE \textit{Compute $\nabla{h_1}$:} 

    \STATE \qquad $G^{(k)} = -\left(\Omega\D^{(k)}\right)^{-1} +
    \frac{1}{2}\left(S\,\,\Omega^{(k)} + \Omega^{(k)}S\right)$

    \STATE \textit{Compute} $\tau_{k}$: 

    \STATE \qquad Largest $\tau_k \in
    \{c^{j}\tau_{(k,0)}\}_{j=0,1,\ldots}$ such that,
    \STATE \qquad $\Omega^{(k+1)} = {\cal
      S}_{\tau_k\Lambda}\displaystyle\left(\Omega^{(k)} -
      {\tau_k}{G}^{(k)} \right)$ satisfies \eqref{suffdescmat}.

    \STATE \textit{Update:} $\Omega^{(k+1)}$ using the appropriate step size.

    \STATE \textit{Compute next initial step size: $\tau_{(k+1,0)}$}


    \STATE \textit{Compute convergence criteria:} 

    \STATE \qquad $\Delta_{\texttt{subg}} =
    \displaystyle\frac{\| \nabla{h_1}(\Omega^{(k)})
      + \partial{h_2}(\Omega^{(k)}) \|}{\|\Omega^{(k)}\|}$.

    \STATE

    \ENDWHILE
  \end{algorithmic}
\end{algorithm}
  \end{minipage}
  \begin{minipage}[t]{0.495\textwidth}
\begin{algorithm}[H]
  \caption{CONCORD-FISTA}
  \label{concordmatrix:fista}
  \begin{algorithmic}

    \STATE Input: sample covariance matrix $S$, penalty matrix
    $\Lambda$

    \STATE Initialize: $({\Theta}^{(1)} = ){\Omega}^{(0)} \in
    \mathbb{S}^{p}_{+}$, $\alpha_{1} = 1$, $\tau_{(0,0)} \leq 1$,\\
    \qquad $c<1$, $\Delta_{\texttt{subg}} = 2\epsilon_{\texttt{subg}}$.


    \WHILE{$\Delta_{\texttt{subg}} > \epsilon_{\texttt{subg}}$
      \textbf{or} $\Delta_{\texttt{func}} > \epsilon_{\texttt{func}}$}

    \STATE \textit{Compute $\nabla{h_1}$:}

    \STATE \qquad $G^{(k)} = -\left(\Theta^{(k)}\D\right)^{-1}
    + \frac{1}{2}\left(S {\Theta^{(k)}} + {\Theta^{(k)}} S\right)$

    \STATE \textit{Compute} $\tau_{k}$: 

    \STATE \qquad Largest $\tau_k \in
    \{c^{j}\tau_{(k,0)}\}_{j=0,1,\ldots}$ such that,

    \STATE \qquad $\Omega^{(k)} = {\cal
      S}_{\tau_k\Lambda}\displaystyle\left(\Theta^{(k)} -
      {\tau_k}{G}^{(k)} \right)$ satisfies \eqref{suffdescmat}


    \STATE \textit{Update:} $\alpha_{k+1} = \displaystyle( 1 + \sqrt{1
      + 4{\alpha_{k}}^{2}} ) / 2$

    \STATE \textit{Update:} $\Theta^{(k+1)} = \Omega^{(k)} +
    \left(\frac{\alpha_{k} - 1}{\alpha_{k+1}}\right)\left(\Omega^{(k)} -
      \Omega^{(k-1)}\right)$

    \STATE \textit{Compute next initial step size: $\tau_{(k+1,0)}$}



    \STATE \textit{Compute convergence criteria:}

    \STATE \qquad $\Delta_{\texttt{subg}} =
    \displaystyle\frac{\| \nabla{h_1}(\Omega^{(k)})
      + \partial{h_2}(\Omega^{(k)}) \|}{\|\Omega^{(k)}\|}$.


    \ENDWHILE
  \end{algorithmic}
\end{algorithm}
  \end{minipage}
\end{figure}
The proximal operator for
the non-smooth function $h_{2}$ is given by element-wise
soft-thresholding operator ${\cal S}_{\lambda}$ as
\begin{align}
  \mbox{prox}_{h_2}(\Omega) &= \argmin_{\Theta} \left\{h_2(\Theta) + \frac{1}{2}\|\Omega - \Theta\|_{F}^{2}\right\} \nonumber \\
  &= {\cal S}_{\Lambda}(\Omega) = \texttt{sign}(\Omega)\max\{|\Omega|-\Lambda,0\}, \label{proxopelement} 
\end{align}
where $\Lambda$ is a matrix with $0$ diagonal and $\lambda$ for each
off-diagonal entry. The details of the proximal gradient algorithm
CONCORD-ISTA are given in Algorithm \ref{concordmatrix:fbs}, and the
details of the accelerated proximal gradient algorithm CONCORD-FISTA
are given in Algorithm \ref{concordmatrix:fista}.

\subsection{Choice of step size}
In the absence of a good estimate of the Lipschitz constant $L$, the
step size for each iteration of CONCORD-ISTA and CONCORD-FISTA is
chosen using backtracking line search. The line search for iteration
$k$ starts with an initial step size $\tau_{(k,0)}$ and reduces the
step with a constant factor $c$ until the new iterate satisfies the
sufficient descent condition:
\begin{align}
{h_1}({\Omega}^{(k+1)}) \leq {\cal Q}(\Omega^{(k+1)}, \Omega^{(k)})\label{suffdescmat}
\end{align}
where, 
\begin{align*}
{\cal Q}(\Omega, \Theta) = {h_1}({\Theta}) + \tr\left(({\Omega} - {\Theta})^{T}\nabla{h_1}({\Theta})\right) + \frac{1}{2\tau} \big\|{\Omega} - {\Theta}\big\|^{2}_{F}.
\end{align*}
In section \ref{sect: numerical}, we have implemented algorithms choosing the initial step size in three different ways: (a) a constant starting step size (=1), (b) the feasible step size from the previous iteration $\tau_{k-1}$, (c) the step size
heuristic of Barzilai-Borwein. The Barzilai-Borwein heuristic step size is given by
\begin{align}
\tau_{k+1,0} = \displaystyle\frac{\tr\left( (\Omega^{(k+1)}-\Omega^{(k)})^{T}(\Omega^{(k+1)}-\Omega^{(k)}) \right)}{ \tr\left( (\Omega^{(k+1)}-\Omega^{(k)})^{T}(G^{(k+1)}-G^{(k)}) \right)}.
\label{bb-step}
\end{align}
This is an approximation of the secant equation which works as a proxy for second order information using successive gradients (see \cite{barzilai1988two} for details).

\subsection{Computational complexity}
After the one time calculation of $S$, the most significant
computation for each iteration in CONCORD-ISTA and CONCORD-FISTA
algorithms is the matrix-matrix multiplication $W = S\Omega$ in the
gradient term. If $s$ is the number of non-zeros in $\Omega$, then $W$
can be computed using ${\cal O}(sp^2)$ operations if we exploit the
extreme sparsity in $\Omega$. The second matrix-matrix multiplication
for the term $\tr(\Omega(S\Omega))$ can be computed efficiently using
$\tr(\Omega W) = \sum \omega_{ij} w_{ij}$ over the set of non-zero
$\omega_{ij}$'s. This computation only requires ${\cal O}(s)$
operations. The remaining computations are all at the element level
which can be completed in ${\cal O}(p^2)$ operations. Therefore, the
overall computational complexity for each iteration reduces to ${\cal
  O}(sp^2)$. On the other hand, the proximal gradient algorithms for the Gaussian
framework require inversion of a full $p\times p$ matrix which is
non-parallelizable and requires ${\cal O}(p^3)$ operations. The
coordinate-wise method for optimizing CONCORD in \cite{Khare2013} also
requires cycling through the $p^2$ entries of $\Omega$ in specified
order and thus does not allow parallelization. In contrast,
CONCORD-ISTA and CONCORD-FISTA can use `perfectly parallel'
implementations to distribute the above matrix-matrix
multiplications. At no step do we need to keep all of the dense
matrices $S, S\Omega, \nabla{h_{1}}$ on a single machine. Therefore,
CONCORD-ISTA and CONCORD-FISTA are scalable to any high dimensions
restricted only by the number of machines.

\section{Convergence Analysis}
\label{sect: convergence}
In this section, we prove convergence of CONCORD-ISTA and
CONCORD-FISTA methods along with their respective convergence rates of
${\cal O}(1/k)$ and ${\cal O}(1/k^2)$. We would like to point out
that, although the authors in \cite{Khare2013} provide a proof of
convergence for their coordinate-wise minimization algorithm for
CONCORD, they do not provide any rates of convergence. The arguments
for convergence leverage the results in \cite{beck2009fast} but
require some essential ingredients. We begin with proving lower and
upper bounds on the diagonal entries $\omega_{kk}$ for $\Omega$
belonging to a level set of $Q_{\con} (\Omega)$. The lower bound on
the diagonal entries of $\Omega$ establishes Lipschitz continuity of
the gradient $\nabla{h_1}(\Omega)$ based on the hessian of the smooth
function as stated in \eqref{hessian}. The proof for the lower bound
uses the existence of an upper bound on the diagonal entries. Hence,
we prove both bounds on the diagonal entries. We begin by defining a
level set ${\cal C}_0$ of the objective function starting with an
arbitrary initial point $\Omega^{(0)}$ with a finite function value as
\begin{alignat}{1}
  \label{levelset} {\cal C}_{0} = \left\{\Omega \,\,\, | \,\,\,
    Q_{\con}(\Omega) \leq Q_{\con}(\Omega^{(0)}) = M \right\}.
\end{alignat}
For the positive semidefinite matrix $S$, let $U$ denote
$\frac{1}{\sqrt{2}}$ times the upper triangular matrix from the LU
decomposition of $S$, such that $S = 2U^{T}U$ (the factor $2$
simplifies further arithmetic). Assuming the diagonal entries of $S$
to be strictly nonzero (if $s_{kk} = 0$, then the $k^{th}$ component
can be ignored upfront since it has zero variance and is equal to a
constant for every data point), we have at least one $k$ such that
$u_{ki} \neq 0$ for every $i$. Using this, we prove the following
theorem bounding the diagonal entries of $\Omega$.
\begin{theorem}
\label{up-low-bound}
For any symmetric matrix $\Omega$ satisfying $\Omega \in {\cal C}_0$,
the diagonal elements of $\Omega$ are bounded above and below by
constants which depend only on $M$, $\lambda$ and $S$. In other words,
\[
0 < a_{M,\lambda,S} \leq |\omega_{kk}| \leq b_{M,\lambda,S}, \,\,\,
\forall \,\,\, k = 1, 2, \ldots, p,
\]
for some constants $a_{M,\lambda,S}$ and $b_{M,\lambda,S}$.
\end{theorem}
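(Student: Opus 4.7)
The plan is to extract all information from the level-set inequality $Q_{\con}(\Omega) \leq M$, then use the assumed row $k^*$ of $U$ with all nonzero entries to bound each $\omega_{kk}$ linearly in $\|U\Omega\|_F$ and $\|\Omega\X\|_1$, and finally close a self-referential inequality by exploiting that $x$ dominates $\log x$.

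First, dividing through by $n$ and decomposing $\log|\omega_{kk}| = \ell_k^+ - \ell_k^-$ with $\ell_k^\pm \geq 0$, the level-set condition can be rearranged to
\begin{align*}
\sum_k \ell_k^- + \tfrac{1}{2}\tr(S\Omega^2) + \tfrac{\lambda}{2}\|\Omega\X\|_1 \;\leq\; \tfrac{M}{n} + T,\qquad T := \sum_k \ell_k^+.
\end{align*}
Every term on the left is nonnegative (using $S \succeq 0$), so each is individually bounded by $M/n + T$. Using $S = 2U^T U$ and the symmetry of $\Omega$ to write $\tfrac{1}{2}\tr(S\Omega^2) = \|U\Omega\|_F^2$, this gives $\|U\Omega\|_F \leq \sqrt{M/n + T}$, $\|\Omega\X\|_1 \leq (2/\lambda)(M/n + T)$, and $\ell_k^- \leq M/n + T$ for every $k$.

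Next, fixing $k$ and isolating $\omega_{kk}$ in the identity $(U\Omega)_{k^*,k} = u_{k^*k}\omega_{kk} + \sum_{i\neq k} u_{k^*i}\omega_{ik}$, the triangle inequality together with $|(U\Omega)_{k^*,k}| \leq \|U\Omega\|_F$ and $\sum_{i\neq k}|u_{k^*i}\omega_{ik}| \leq (\max_i|u_{k^*i}|)\,\|\Omega\X\|_1$ yields
\begin{align*}
|\omega_{kk}| \;\leq\; \frac{\|U\Omega\|_F + (\max_i|u_{k^*i}|)\,\|\Omega\X\|_1}{|u_{k^*k}|} \;\leq\; C_S(1 + T)
\end{align*}
for some constant $C_S$ depending only on $M$, $\lambda$, and $S$ (after substituting the bounds of the previous paragraph and absorbing constants, all finite because $|u_{k^*i}| > 0$ for every $i$).

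The final step closes the loop. Since $|\omega_{kk}| \leq C_S(1+T)$ for every $k$, we have $\ell_k^+ \leq \log(C_S(1+T))$ when the right-hand side is positive (and is zero otherwise), so summing over $k$ yields $T \leq p\,\log(C_S(1+T))$. Because $x$ grows strictly faster than $\log x$, this forces $T \leq T_{\max}$ for a constant $T_{\max}$ depending only on $p$ and $C_S$, and hence $b_{M,\lambda,S} := C_S(1+T_{\max})$ is the desired upper bound. The lower bound drops out immediately: from $\ell_k^- \leq M/n + T_{\max}$, whenever $|\omega_{kk}| \leq 1$ we have $-\log|\omega_{kk}| \leq M/n + T_{\max}$, hence $|\omega_{kk}| \geq a_{M,\lambda,S} := e^{-(M/n + T_{\max})}$. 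The main obstacle is the circular dependence between $|\omega_{kk}|$ and $\|\Omega\X\|_1$ created by the $\ell_1$ penalty; the row-$k^*$ identity linearizes this dependence, and the appeal to the super-logarithmic growth of $x$ then closes the loop cleanly.
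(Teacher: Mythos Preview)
Your proof is correct and takes a genuinely different route from the paper's. The paper argues the upper bound by singling out the \emph{largest} diagonal entry $|\omega_{ii}|$, keeping only the single term $(U\Omega)_{ki}^2$ from the Frobenius norm and the $i$-th column from the $\ell_1$ penalty, and then completing the square: with $z=|u_{ki}\omega_{ii}|$ and $x=\sum_{j\neq i}u_{kj}\omega_{ji}$ it obtains either $\bar M\geq -p\log z + z^2$ (when $x\geq 0$) or $\bar M\geq -p\log z + 2\bar\lambda z$ (when $x<0$), each of which bounds $z$ outright. The lower bound then comes from $M\geq -(p-1)\log b_{M,\lambda,S}-\log|\omega_{kk}|$. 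Your argument is instead global: you move the positive parts of the logs to the right as $T$, bound $\|U\Omega\|_F$, $\|\Omega\X\|_1$, and each $\ell_k^-$ in terms of $M/n+T$, linearize $|\omega_{kk}|$ via the row identity, and close the self-referential inequality $T\leq p\log(C_S(1+T))$. The paper's completion-of-square trick yields a clean one-variable inequality with no auxiliary quantity; your approach avoids the case split and treats all diagonal entries uniformly, at the cost of the fixed-point argument for $T$.

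One small correction: you rely on a single row $k^*$ of $U$ with all entries nonzero, but $U$ is upper triangular, so this is not what the setup guarantees. What holds (and what the paper's proof uses) is that for each column $k$ there exists some row $k^*(k)$ with $u_{k^*(k),k}\neq 0$. Your argument goes through verbatim if you let $k^*$ depend on $k$ and take $C_S$ to involve $\min_k|u_{k^*(k),k}|>0$ and $\max_{k,i}|u_{k^*(k),i}|$.
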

\begin{proof}
(a) Upper bound: Suppose $|\omega_{ii}| = \max\{|\omega_{kk}|, \mbox{for}\,\, k = 1, 2, \ldots, p\}$. Then, we have
\begin{align}
M &= Q_{\con}(\Omega^{(0)}) \geq Q_{\con}(\Omega) = h_1(\Omega) + h_2(\Omega) \nonumber \\
& \geq -\log\det{\Omega_D} + \tr\left((U\Omega)^{T}(U\Omega)\right) + \lambda\|\Omega_X\|_{1} \nonumber \\
& = -\log\det{\Omega_D} + \|U\Omega\|_{F}^{2} + \lambda\|\Omega_X\|_{1}.
\end{align}
Considering only the ${ki}^{th}$ entry in the Frobenious norm term and
the $i^{th}$ column penalty in the third term we get
\begin{align}
M & \geq -p\log|\omega_{ii}| + \left(\sum_{j=k}^{j=p}u_{kj}\omega_{ji}\right)^{2} + \lambda\sum_{j=k, j\neq i}^{j=p}|\omega_{ji}|.
\label{intermediate}
\end{align}
Now, suppose $|u_{ki}\omega_{ii}| = z$ and $\sum_{j=k, j\neq i}^{j=p}u_{kj}\omega_{ji} = x$. Then 
\begin{align}
|x| &\leq \displaystyle\sum_{j=k, j\neq i}^{j=p}|u_{kj}||\omega_{ji}| \leq \bar{u}\displaystyle\sum_{j=k, j\neq i}^{j=p}|\omega_{ji}|, \nonumber
\end{align}
where $\bar{u} = \max\{|u_{kj}|, \mbox{for}\,\, j = k, k+1, \ldots, p \,\, \mbox{and}\,\, j\neq i\}$. Going back to the inequality (\ref{intermediate}), for $\bar{\lambda} = \frac{\lambda}{2\bar{u}}$, we have
\begin{align}
\bar{M} = M + \bar{\lambda}^2 - p \log|u_{ki}| & \geq -p\log z + \left(z + x\right)^{2} + 2\bar{\lambda}|x| + \bar{\lambda}^{2} \label{firstinequality}\\
& = -p\log z + \left(z + x + \bar{\lambda} \texttt{sign}(x)\right)^{2} - 2\bar{\lambda}z\,\texttt{sign}(x) \label{secondinequality}
\end{align}
Here, if $x \geq 0$, then $\bar{M} \geq -p\log z + z^2$ using the first inequality (\ref{firstinequality}), and if $x < 0$, then $\bar{M} \geq -p\log z +  2\bar{\lambda}z$ using the second inequality (\ref{secondinequality}). In either cases, the functions $-p\log z + z^2$ and $-p\log z + 2\bar{\lambda}z$ are unbounded as $z \rightarrow \infty$. Hence, the upper bound of $\bar{M}$ on these functions guarantee an upper bound $b_{M,\lambda,S}$ such that $|\omega_{ii}| \leq b_{M,\lambda,S}$. Therefore, $|\omega_{kk}| \leq b_{M,\lambda,S}$ for all $k = 1, 2, \ldots, p$.

(b) Lower bound: By positivity of the trace term and the $\ell_1$ term (for off-diagonals), we have
\begin{align}
M & \geq -\log\det{\Omega_D} = \sum_{i=1}^{i=p} -\log|\omega_{ii}|.
\end{align}
The negative log function $g(z) = -\log(z)$ is a convex function with a lower bound at $z^* = b_{M,\lambda,S}$ with $g(z^*) = -\log b_{M,\lambda,S}$.  Therefore, for any $k = 1, 2, \ldots, p$, we have 
\begin{align}
M &\geq \sum_{i=1}^{i=p} -\log|\omega_{ii}| \geq -(p-1)\log b_{M,\lambda,S} -\log|\omega_{kk}|.
\end{align}
Simplifying the above equation, we get the lower bound $a_{M,\lambda,S}$ on the diagonal entries $\omega_{kk}$. More specifically, 
\begin{align}
\log|\omega_{kk}| \geq - M - (p-1)\log b_{M,\lambda,S}.  \nonumber 
\end{align}
Therefore, $|\omega_{kk}| \geq a_{M,\lambda,S} = e^{- M - (p-1)\log b_{M,\lambda,S}} > 0$ serves as a lower bound for all $k = 1, 2, \ldots, p$.
\end{proof}

Given that the function values are non-increasing along the iterates of Algorithms \ref{concordmatrix:fbs}, \ref{concordmatrix:fista} and \ref{concordmatrix:pnopt}, the sequence of $\Omega^{(k)}$ satisfy $\Omega^{(k)} \in {\cal C}_{0}$ for $k = 1, 2, ....$. The lower bounds on the diagonal elements of $\Omega^{(k)}$ provides the Lipschitz continuity using
\begin{align}
\nabla^{2}h_1(\Omega^{(k)}) &\preceq \left(a_{M,\lambda,S}^{-2} + \|S\|_{2} \right) \left(I \otimes I\right).
\end{align}
Therefore, using the mean-value theorem, the gradient $\nabla{h_1}$ satisfies
\begin{align}
\|\nabla h_1(\Omega) - \nabla h_1(\Theta) \|_{F} & \leq L \|\Omega - \Theta\|_{F},
\end{align}
with the Lipschitz continuity constant $L = a_{M,\lambda,S}^{-2} + \|S\|_{2}$. The remaining argument for convergence follows from the theorems in \cite{beck2009fast}. 
\begin{theorem}
\label{1overk}
(\cite[Theorem 3.1]{beck2009fast}). Let $\{\Omega^{(k)}\}$ be the sequence generated by either Algorithm \ref{concordmatrix:fbs} with constant step size or with backtracking line-search. Then for any
$k \geq 1$,
\begin{alignat}{1}
Q_{\con}(\Omega^{(k)}) - Q_{\con}(\Omega^{*}) \leq \frac{\alpha L \|\Omega^{(0)} - \Omega^{*}\|_{F}^{2}}{2 k}
\end{alignat}
for the solution $\Omega^{*}$, where $\alpha = 1$ for the constant step size setting and $\alpha = c$ for the backtracking step size setting.
\end{theorem}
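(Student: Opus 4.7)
The plan is to reduce the statement to Theorem 3.1 of \cite{beck2009fast}, whose argument applies almost verbatim once its hypotheses are verified in our setting. Beck--Teboulle require three ingredients: (i) convexity of both $h_1$ and $h_2$; (ii) a tractable proximal map for $h_2$; and (iii) Lipschitz continuity of $\nabla h_1$ with some constant $L$ along the trajectory of the algorithm. Ingredients (i) and (ii) are immediate from the structure already established: $h_1 = -\log\det\Omega\D + \tfrac{1}{2}\tr(\Omega S\Omega)$ is convex as the sum of a convex function of the diagonal and a PSD quadratic, $h_2=\lambda\|\Omega\X\|_1$ is a scaled norm, and the proximal map of $h_2$ is the elementwise soft-threshold $\mathcal{S}_\Lambda$ displayed in \eqref{proxopelement}.

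The genuine subtlety is that $\nabla h_1(\Omega)=-\Omega\D^{-1}+\tfrac{1}{2}(S\Omega^T+\Omega S)$ is not globally Lipschitz, since $\Omega\D^{-1}$ blows up as any $|\omega_{kk}|\to 0$. To circumvent this I would use the sublevel-set bound of Theorem \ref{up-low-bound}: every $\Omega\in\mathcal{C}_0$ satisfies $|\omega_{kk}|\ge a_{M,\lambda,S}$, and since the region carved out by these diagonal inequalities is convex, \eqref{hessian} yields
\begin{align*}
\nabla^2 h_1(\Omega)\preceq \bigl(a_{M,\lambda,S}^{-2}+\|S\|_2\bigr)\,(I\otimes I)\qquad\text{on }\mathcal{C}_0,
\end{align*}
which integrated along segments between iterates gives $\|\nabla h_1(\Omega)-\nabla h_1(\Theta)\|_F\le L\|\Omega-\Theta\|_F$ with $L=a_{M,\lambda,S}^{-2}+\|S\|_2$. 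I would then verify by induction that the CONCORD-ISTA iterates remain in $\mathcal{C}_0$: $\Omega^{(0)}\in\mathcal{C}_0$ by definition, and the sufficient descent condition \eqref{suffdescmat}, combined with the standard prox-gradient inequality $Q_{\con}(\Omega^{(k+1)})\le Q_{\con}(\Omega^{(k)})$, keeps each iterate in the level set.

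With iterate-level Lipschitz continuity established, the remainder is the standard Beck--Teboulle calculation. The key estimate, which follows from the prox-definition of $\Omega^{(k+1)}$ together with \eqref{suffdescmat} and convexity of $h_1,h_2$, is
\begin{align*}
\tfrac{2\tau_k}{\alpha}\bigl(Q_{\con}(\Omega^{(k+1)})-Q_{\con}(\Omega^*)\bigr)\le \|\Omega^{(k)}-\Omega^*\|_F^2-\|\Omega^{(k+1)}-\Omega^*\|_F^2.
\end{align*}
In the constant-step setting $\tau_k=1/L$ with $\alpha=1$; for backtracking, the line search terminates as soon as some $\tau_k=c^j\tau_{(k,0)}$ falls below $1/L$ (which forces \eqref{suffdescmat} by the descent lemma), so $\tau_k\ge c/L$ with $\alpha=c$. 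Telescoping over $j=0,\dots,k-1$ and invoking monotonicity of $Q_{\con}(\Omega^{(j)})$ to replace the sum on the left by $k$ copies of the final gap yields the claimed $\mathcal{O}(1/k)$ bound.

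The main obstacle, and essentially the only departure from \cite{beck2009fast}, is the non-global Lipschitz behavior of $\nabla h_1$; once Theorem \ref{up-low-bound} supplies a uniform positive lower bound on the diagonals along the iterate sequence and descent is confirmed, the proof reduces cleanly to \cite[Thm.~3.1]{beck2009fast}, with the only bookkeeping being the factor $\alpha\in\{1,c\}$ coming from the line-search lower bound on $\tau_k$.
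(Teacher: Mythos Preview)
Your proposal is correct and mirrors the paper's approach exactly: the paper's argument consists of invoking Theorem~\ref{up-low-bound} to bound the diagonals on $\mathcal{C}_0$, noting that the iterates remain in $\mathcal{C}_0$ by monotone descent, deriving the Lipschitz constant $L=a_{M,\lambda,S}^{-2}+\|S\|_2$ from the Hessian bound, and then deferring entirely to \cite[Theorem~3.1]{beck2009fast}. One small wording fix: the set $\{\Omega:|\omega_{kk}|\ge a\}$ is not convex, but you do not need it to be---the level set $\mathcal{C}_0$ itself is convex (since $Q_{\con}$ is), so segments between iterates stay in $\mathcal{C}_0$ and the Hessian bound applies along them.
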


\begin{theorem}
\label{1overksq}
(\cite[Theorem 4.4]{beck2009fast}). For the sequences $\{\Omega^{(k)}\}, \{\Theta^{(k)}\}$ generated by Algorithm \ref{concordmatrix:fista}, for any $k \geq 1$, 
\begin{alignat}{1}
Q_{\con}(\Omega^{(k)}) - Q_{\con}(\Omega^{*}) \leq \frac{2 \alpha L \|\Omega^{(0)} - \Omega^{*}\|_{F}^{2}}{(k+1)^2}
\end{alignat}
for the solution $\Omega^{*}$, where $\alpha = 1$ for the constant step size setting and $\alpha = c$ for the backtracking step size setting. 
\end{theorem}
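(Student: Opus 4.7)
The claim is structurally a black-box application of Beck--Teboulle's Theorem 4.4 for FISTA, so my plan is to verify its hypotheses in the CONCORD setting rather than re-derive the rate from scratch. Those hypotheses are: (i) convexity of $Q_{\con} = h_1 + h_2$, (ii) $h_2$ proper, convex, lower semicontinuous with a computable proximal operator, (iii) $h_1$ convex and continuously differentiable with $L$-Lipschitz gradient on the relevant region, and (iv) a per-step backtracking rule certifying the sufficient descent inequality \eqref{suffdescmat}. Convexity of $Q_{\con}$ is already established in \cite{Khare2013}; item (ii) holds because $h_2 = \lambda\|\Omega\X\|_1$ and its prox is the soft-thresholding operator in \eqref{proxopelement}; item (iv) is built into the line search that defines $\tau_k$ in Algorithm \ref{concordmatrix:fista}. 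The only substantive step is (iii).

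For (iii), I would use Theorem \ref{up-low-bound} in exactly the same way as it is used just before Theorem \ref{1overk}: on any sublevel set ${\cal C}_0$ of $Q_{\con}$, the bound $|\omega_{kk}| \geq a_{M,\lambda,S}$ combined with the Hessian expression \eqref{hessian} yields $\nabla^2 h_1(\Omega) \preceq (a_{M,\lambda,S}^{-2} + \|S\|_2)\, I\otimes I$, and the mean value theorem then gives Lipschitz continuity with constant $L = a_{M,\lambda,S}^{-2} + \|S\|_2$ on ${\cal C}_0$.

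The main obstacle is that, unlike CONCORD-ISTA, the FISTA iterates are not guaranteed to be monotonically non-increasing in $Q_{\con}$, so we cannot immediately assert $\Omega^{(k)}, \Theta^{(k)} \in {\cal C}_0$ for all $k$. I would handle this by a standard enlargement argument: define a slightly larger sublevel set $\tilde{\cal C}_0 = \{\Omega : Q_{\con}(\Omega) \leq M + D\}$ where $D$ depends only on $\|\Omega^{(0)} - \Omega^*\|_F^2$ and the (provisional) Lipschitz constant, and then use Beck--Teboulle's Lemma~4.1 together with the backtracking certificate \eqref{suffdescmat} to show inductively that both $\Omega^{(k)}$ and the extrapolated points $\Theta^{(k)}$ remain in $\tilde{\cal C}_0$. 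On $\tilde{\cal C}_0$, Theorem \ref{up-low-bound} still applies (with constants depending on $M+D$), so the Lipschitz bound is valid along the entire trajectory; since the bound in \eqref{suffdescmat} is what is actually used in the proof, the constant $L$ itself does not appear explicitly, only its effect through the accepted step sizes $\tau_k \geq c/L$.

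With Lipschitz continuity of $\nabla h_1$ secured along the iterates and the descent certificate \eqref{suffdescmat} holding by construction, the remainder of the proof is a direct invocation of \cite[Theorem 4.4]{beck2009fast}: the two-term Lyapunov inequality $2\tau_k \alpha_k^2 v_k - 2\tau_{k+1}\alpha_{k+1}^2 v_{k+1} \geq \|u_{k+1}\|_F^2 - \|u_k\|_F^2$, where $v_k = Q_{\con}(\Omega^{(k)}) - Q_{\con}(\Omega^*)$ and $u_k = \alpha_k \Omega^{(k)} - (\alpha_k - 1)\Omega^{(k-1)} - \Omega^*$, combined with the induction $\alpha_{k+1}^2 - \alpha_{k+1} = \alpha_k^2$ implying $\alpha_k \geq (k+1)/2$, telescopes to the stated bound with $\alpha = c$ in the backtracking case (since $\tau_k \geq c/L$) and $\alpha = 1$ in the constant step-size case.
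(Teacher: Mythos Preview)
Your approach is essentially the paper's: establish Lipschitz continuity of $\nabla h_1$ on a sublevel set via Theorem~\ref{up-low-bound} and the Hessian bound, then invoke \cite[Theorem~4.4]{beck2009fast} as a black box. The paper gives no separate proof for this theorem; it treats the Lipschitz bound as already secured by the paragraph preceding Theorem~\ref{1overk} and simply cites Beck--Teboulle.

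Where you diverge is in the treatment of monotonicity. The paper asserts without further comment that ``the function values are non-increasing along the iterates of Algorithms~\ref{concordmatrix:fbs}, \ref{concordmatrix:fista} and~\ref{concordmatrix:pnopt}'' and uses this to keep all iterates in ${\cal C}_0$. You correctly flag that standard FISTA is not monotone and that the extrapolated points $\Theta^{(k)}$ need not lie in ${\cal C}_0$, and you propose an enlarged sublevel set $\tilde{\cal C}_0$ to absorb overshoots. This is more careful than the paper's treatment. That said, your enlargement argument carries a latent circularity---the margin $D$ depends on $L$, which in turn depends on the constants $a_{M+D,\lambda,S}$ of the enlarged set---so it would need a bootstrapping or fixed-point step to close rigorously. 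The paper simply does not confront this issue.
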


Hence, CONCORD-ISTA and CONCORD-FISTA converge at the rates of ${\cal O}(1/k)$ and ${\cal O}(1/k^{2})$ for the $k^{th}$ iteration.


\section{Implementation \& Numerical Experiments}
\label{sect: numerical}

In this section, we outline algorithm implementation details and
present results of our comprehensive numerical evaluation. Section
\ref{sect: synthetic} gives performance comparisons from using
synthetic multivariate Gaussian datasets. These datasets are generated
from a wide range of sample sizes ($n$) and dimensionality
($p$). Additionally, convergence of CONCORD-ISTA and CONCORD-FISTA
will be illustrated. Section \ref{sect: real} has timing results from
analyzing a real breast cancer dataset with outliers. Comparisons are
made to the coordinate-wise CONCORD implementation in {\tt gconcord}
package for R available at
\url{http://cran.r-project.org/web/packages/gconcord/}.

For implementing the proposed algorithms, we can take advantage of
existing linear algebra libraries. Most of the numerical computations
in Algorithms \ref{concordmatrix:fbs} and \ref{concordmatrix:fista}
are linear algebra operations, and, unlike the sequential
coordinate-wise CONCORD algorithm, CONCORD-ISTA and CONCORD-FISTA
implementations can solve increasingly larger problems as more and
more scalable and efficient linear algebra libraries are made
available. For this work, we opted to using Eigen library
\citep*{eigenweb} for its sparse linear algebra routines written in
C++. Algorithms \ref{concordmatrix:fbs} and \ref{concordmatrix:fista}
were also written in C++ then interfaced to R for testing. Table
\ref{tbl:naming convention} gives names for various CONCORD-ISTA and
CONCORD-FISTA versions using different initial step size choices.



\subsection{Synthetic Datasets}
\label{sect: synthetic}

Synthetic datasets were generated from true sparse positive random
$\Omega$ matrices of three sizes: $p=\{1000,\ 3000,\
5000\}$. Instances of random matrices used here consist of 4995, 14985
and 24975 non-zeros, corresponding to 1\%, 0.33\% and 0.20\% edge
densities, respectively. For each $p$, three random samples of sizes
$n=\{0.25p,\ 0.75p,\ 1.25p\}$ were used as inputs. The initial guess,
$\Omega^{(0)}$, and the convergence criteria was matched to those of
coordinate-wise CONCORD implementation. Highlights of the results are
summarized below, and the complete set of comparisons are given in
Supplementary materials Section \ref{appendix: timing comparison}.

For synthetic datasets, our experiments indicate that two variations
of the CONCORD-ISTA method show little performance
difference. However, {\tt ccista\_0} was marginally faster in our
tests. On the other hand, {\tt ccfista\_1} variation of CONCORD-FISTA
that uses $\tau_{(k+1,0)}=\tau_k$ as initial step size was
significantly faster than {\tt ccfista\_0}. Table
\ref{tbl:timing-short} gives actual running times for the two best
performing algorithms, {\tt ccista\_0} and {\tt ccfista\_1}, against
the coordinate-wise {\tt concord}. As $p$ and $n$ increase {\tt
  ccista\_0} performs very well. For smaller $n$ and $\lambda$,
coordinate-wise {\tt concord} performs well (more in Supplemental
section \ref{appendix: timing comparison}). This can be attributed to
$\min({\cal O}(np^2),{\cal O}(p^3))$ computational complexity of
coordinate-wise CONCORD \citep{Khare2013}, and the sparse linear
algebra routines used in CONCORD-ISTA and CONCORD-FISTA
implementations slowing down as the number of non-zero elements in
$\Omega$ increases. On the other hand, for large $n$ fraction
($n=1.25p$), the proposed methods {\tt ccista\_0} and {\tt ccfista\_1}
are significantly faster than coordinate-wise {\tt concord}. In
particular, when $p=5000$ and $n=6250$, the speed-up of {\tt
  ccista\_0} can be as much as 150 times over coordinate-wise {\tt
  concord}.





\begin{figure}[htb]
  \centering
\begin{knitrout}
\definecolor{shadecolor}{rgb}{0.969, 0.969, 0.969}\color{fgcolor}
\includegraphics[trim=0.3cm 0.7cm 0.5cm 0.5cm, clip, width=0.8\maxwidth]{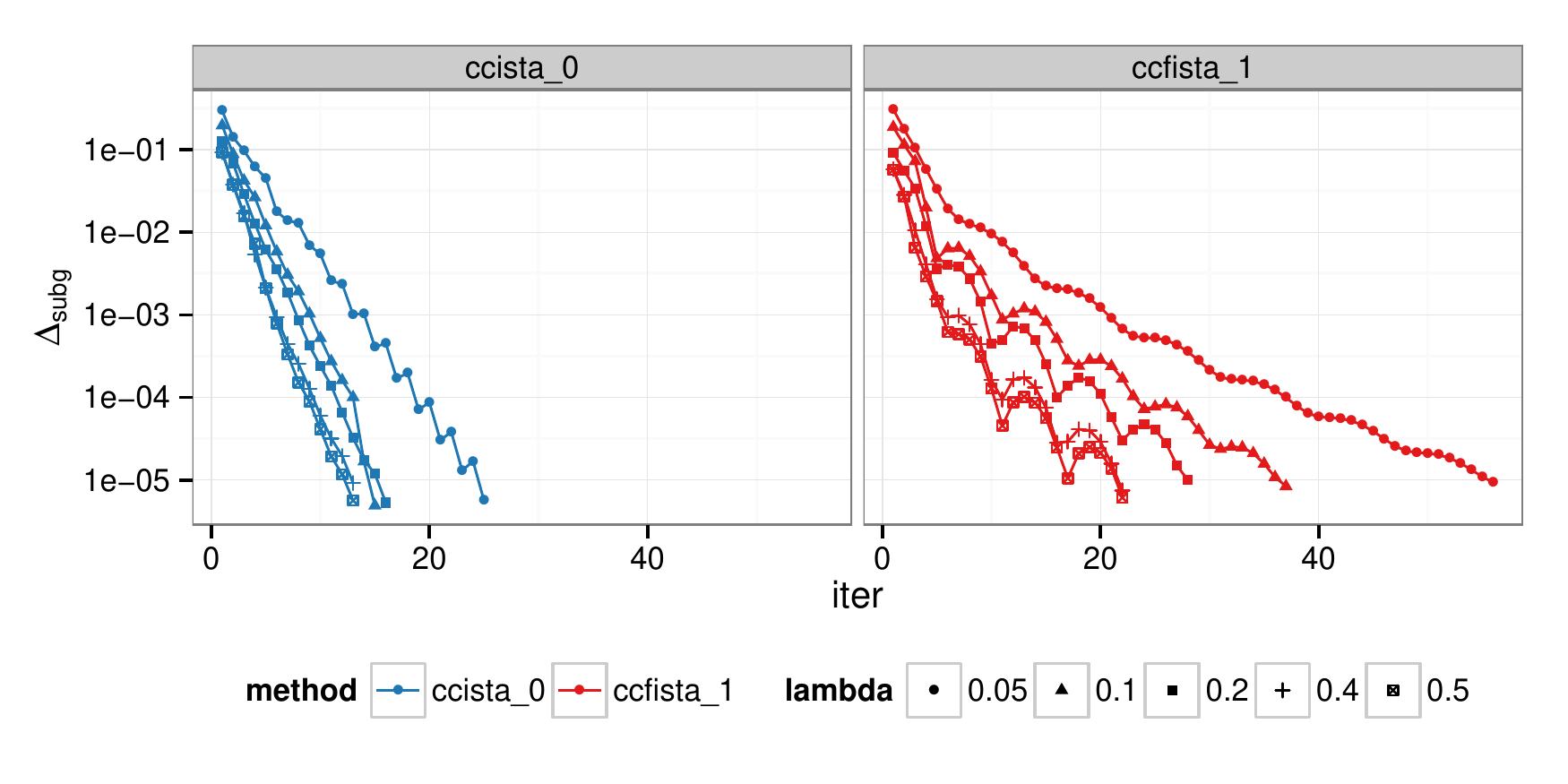} 
\end{knitrout}
\caption{Convergence of CONCORD-ISTA and CONCORD-FISTA. Convergence
  threshold is $\Delta_{\mathtt{subg}} < 10^{-5}$}
\label{fig:convergence}
\end{figure}

Convergence behavior of CONCORD-ISTA and CONCORD-FISTA methods is
shown in Figure \ref{fig:convergence}. The best performing algorithms
{\tt ccista\_0} and {\tt ccfista\_1} are shown. The vertical axis is
the subgradient $\Delta_{\texttt{subg}}$ (See Algorithms
\ref{concordmatrix:fbs}, \ref{concordmatrix:fista}). Plots show that
{\tt ccista\_0} seems to converge at a constant rate much faster than
{\tt ccfista\_1} that appears to slow down after a few initial
iterations. While the theoretical convergence results from section
\ref{sect: convergence} prove convergence rates of ${\cal O}(1/k)$ and
${\cal O}(1/k^2)$ for CONCORD-ISTA and CONCORD-FISTA, in practice,
{\tt ccista\_0} with constant step size performed the fastest for the
tests in this section.

\begin{table}[h]
  \centering
  \caption{Naming convention for step size variations} 
  \label{tbl:naming convention}
  \begin{tabular}{ccc}
    \hline
    \textbf{Variation} & \textbf{Method} & \textbf{Initial step size} \\ 
    \hline
    \ttbf concord & CONCORD & - \\ 
    \ttbf ccista\_0 & CONCORD-ISTA & Constant \\ 
    \ttbf ccista\_1 & CONCORD-ISTA & Barzilai-Borwein \\ 
    \ttbf ccfista\_0 & CONCORD-FISTA & Constant \\ 
    \ttbf ccfista\_1 & CONCORD-FISTA & $\tau_k$ \\ 
    \hline
  \end{tabular}
  \vspace{0.5cm}
  \caption{Timing comparison of {\tt concord} and proposed methods:
    {\tt ccista\_0} and {\tt ccfista\_1}.}
  \label{tbl:timing-short}
  \begin{tabular}{|c|c||c|r||rr||rr||rr||}\hline
    \multirow{2}{*}{$\bm p$} & 
    \multirow{2}{*}{$\bm n$} & 
    \multirow{2}{*}{$\bm \lambda$} & 
    \multirow{2}{*}{\bf NZ\%} & 
    \multicolumn{2}{c||}{\ttbf concord} & 
    \multicolumn{2}{c||}{\ttbf ccista\_0} & 
    \multicolumn{2}{c||}{\ttbf ccfista\_1} \\\cline{5-10}
    & & & & \bf iter & \bf seconds & \bf iter & \bf seconds & \bf iter & \bf seconds\\
    \hline
    \multirow{9}{*}{1000}
    & \multirow{3}{*}{250} 
      & 0.150 & 1.52 & 9 & 3.2 & 13 & \bf 1.8 & 20 & 3.3 \\ 
    & & 0.163 & 0.99 & 9 & 2.6 & 18 & \bf 2.0 & 26 & 3.3 \\ 
    & & 0.300 & 0.05 & 9 & 2.6 & 15 & \bf 1.2 & 23 & 2.7 \\ 
    \cline{2-10}
    & \multirow{3}{*}{750} 
      & 0.090 & 1.50 & 9 & 8.9 & 11 & \bf 1.4 & 17 & 2.5 \\ 
    & & 0.103 & 0.76 & 9 & 8.4 & 15 & \bf 1.6 & 24 & 3.3 \\ 
    & & 0.163 & 0.23 & 9 & 8.0 & 15 & \bf 1.6 & 24 & 2.8 \\ 
    \cline{2-10}
    & \multirow{3}{*}{1250} 
      & 0.071 & 1.41 & 9 & 41.3 & 10 & \bf 1.4 & 17 & 2.9 \\ 
    & & 0.077 & 0.97 & 9 & 40.5 & 15 & \bf 1.7 & 24 & 3.3 \\ 
    & & 0.163 & 0.23 & 9 & 43.8 & 13 & \bf 1.2 & 23 & 2.8 \\ 
    \hline
    \multirow{9}{*}{3000}
    & \multirow{3}{*}{750} 
      & 0.090 & 1.10 & 17 & 147.4 & 20 & \bf 32.4 & 25 & 53.2 \\ 
    & & 0.103 & 0.47 & 17 & 182.4 & 28 & \bf 36.0 & 35 & 60.1 \\ 
    & & 0.163 & 0.08 & 16 & 160.1 & 28 & \bf 28.3 & 26 & 39.9 \\ 
    \cline{2-10}
    & \multirow{3}{*}{2250} 
      & 0.053 & 1.07 & 16 & 388.3 & 17 & \bf 28.5 & 17 & 39.6 \\ 
    & & 0.059 & 0.56 & 16 & 435.0 & 28 & \bf 38.5 & 26 & 61.9 \\ 
    & & 0.090 & 0.16 & 16 & 379.4 & 16 & \bf 19.9 & 15 & 23.6 \\ 
    \cline{2-10}
    & \multirow{3}{*}{3750} 
      & 0.040 & 1.28 & 16 & 2854.2 & 17 & \bf 33.0 & 17 & 47.3 \\ 
    & & 0.053 & 0.28 & 16 & 2921.5 & 15 & \bf 23.5 & 16 & 31.4 \\ 
    & & 0.163 & 0.07 & 15 & 2780.5 & 25 & \bf 35.1 & 32 & 56.1 \\ 
    \hline
    \multirow{9}{*}{5000}
    & \multirow{3}{*}{1250}
      & 0.066 & 1.42 & 17 & 832.7 & 32 & \bf 193.9 & 37 & 379.2 \\ 
    & & 0.077 & 0.53 & 17 & 674.7 & 30 & \bf 121.4 & 35 & 265.8 \\ 
    & & 0.103 & 0.10 & 17 & 667.6 & 27 & \bf 81.2 & 33 & 163.0 \\ 
    \cline{2-10}
    & \multirow{3}{*}{3750} 
      & 0.039 & 1.36 & 17 & 2102.8 & 18 & \bf 113.0 & 17 & 176.3 \\ 
    & & 0.049 & 0.31 & 17 & 1826.6 & 16 & \bf 73.4 & 17 & 107.4 \\ 
    & & 0.077 & 0.10 & 17 & 2094.7 & 29 & \bf 95.8 & 33 & 178.1 \\ 
    \cline{2-10}
    & \multirow{3}{*}{6250}
      & 0.039 & 0.27 & 17 & 15629.3 & 17 & \bf 93.9 & 17 & 130.0 \\ 
    & & 0.077 & 0.10 & 17 & 15671.1 & 27 & \bf 101.0 & 25 & 123.9 \\ 
    & & 0.163 & 0.04 & 16 & 14787.8 & 26 & \bf 97.3 & 34 & 173.7 \\ 
    \hline
  \end{tabular}
  \vspace{0.5cm}
  \caption{Running time comparison on breast cancer dataset}
  \label{tbl:real data}
  \begin{tabular}{|c|r||rr||rr||rr||rr||rr||}\hline
    \multirow{2}{*}{$\bm \lambda$} & 
    \multirow{2}{*}{\bf NZ\%} & 
    \multicolumn{2}{c||}{\ttbf concord} & 
    \multicolumn{2}{c||}{\ttbf ccista\_0} & 
    \multicolumn{2}{c||}{\ttbf ccista\_1} & 
    \multicolumn{2}{c||}{\ttbf ccfista\_0} & 
    \multicolumn{2}{c||}{\ttbf ccfista\_1} \\\cline{3-12}
    & & \bf iter & \bf sec & \bf iter & \bf sec & \bf iter & \bf sec
    & \bf iter & \bf sec & \bf iter & \bf sec\\
  \hline
  0.450 & 0.110 & 80 & 724.5 & 132 & 686.7 & 123 & \bf 504.0 & 250 & 10870.3 & 201 & 672.6 \\ 
  0.451 & 0.109 & 80 & 664.2 & 129 & 669.2 & 112 & \bf 457.0 & 216 &  7867.2 & 199 & 662.9 \\ 
  0.454 & 0.106 & 80 & 690.3 & 130 & 686.2 &  81 & \bf 352.9 & 213 &  7704.2 & 198 & 677.8 \\ 
  0.462 & 0.101 & 79 & 671.6 & 125 & 640.4 & 109 & \bf 447.1 & 214 &  7978.4 & 196 & 646.3 \\ 
  0.478 & 0.088 & 77 & 663.3 & 117 & 558.6 &  87 & \bf 337.9 & 202 &  6913.1 & 197 & 609.0 \\ 
  0.515 & 0.063 & 63 & 600.6 & 104 & 466.0 &  75 & \bf 282.4 & 276 &  9706.9 & 184 & 542.0 \\ 
  0.602 & 0.027 & 46 & 383.5 &  80 & 308.0 &  66 & \bf 229.7 & 172 &  4685.2 & 152 & 409.1 \\ 
  0.800 & 0.002 & 24 & 193.6 &  45 & 133.8 &  32 & \bf  92.2 &  74 &  1077.2 &  70 & 169.8 \\ 
   \hline

  \end{tabular}
\end{table}

\subsection{Real Data}
\label{sect: real}

Real datasets arising from various physical and biological sciences
often are not multivariate Gaussian and can have outliers. Hence,
convergence characteristic may be different on such datasets. In this
section, the performance of proposed methods are assessed on a breast
cancer dataset \citep{Chang2005}. This dataset contains expression
levels of 24481 genes on 266 patients with breast cancer.  Following
the approach in Khare et al. \cite{Khare2013}, the number of genes are
reduced by utilizing clinical information that is provided together
with the microarray expression dataset. In particular, survival
analysis via univariate Cox regression with patient survival times is
used to select a subset of genes closely associated with breast
cancer. A choice of p-value $<0.03$ yields a reduced dataset with
$p=4433$ genes.

Often times, graphical model selection algorithms are applied in a
non-Gaussian and $n\ll p$ setting such as the case here. In this $n\ll
p$ setting, coordinate-wise CONCORD algorithm is especially fast due
to its computational complexity ${\cal O}(np^2)$. However, even in
this setting, the newly proposed methods {\tt ccista\_0}, {\tt
  ccista\_1}, and {\tt ccfista\_1} perform competitively to, or often
better than, {\tt concord} as illustrated in Table \ref{tbl:real
  data}. On this real dataset, {\tt ccista\_1} performed the fastest
whereas {\tt ccista\_0} was the fastest on synthetic datasets.

\section{Conclusion}
The Gaussian graphical model estimation or inverse covariance
estimation has seen tremendous advances in the past few years. In this
paper we propose using proximal gradient methods to solve the general
non-Gaussian sparse inverse covariance estimation problem. Rates of
convergence were established for the CONCORD-ISTA and CONCORD-FISTA
algorithms. Coordinate-wise minimization has been the standard
approach to this problem thus far, and we provide numerical results
comparing CONCORD-ISTA/FISTA and coordinate-wise minimization. We
demonstrate that CONCORD-ISTA outperforms coordinate-wise in general,
and in high dimensional settings CONCORD-ISTA can outperform
coordinate-wise optimization by orders of magnitude. The methodology
is also tested on real data sets. We undertake a comprehensive
treatment of the problem by also examining the dual formulation and
consider methods to maximize the dual objective. We note that efforts
similar to ours for the Gaussian case has appeared in not one, but
several NIPS and other publications. Our approach on the other hand
gives a complete and thorough treatment of the non-Gaussian partial
correlation graph estimation problem, all in this one self-contained
paper.


\clearpage
\bibliographystyle{unsrtnat}
\bibliography{library}

\newpage
\appendix

\setcounter{page}{1}

\section*{Supplementary Materials}

\section{Timing comparison}
\label{appendix: timing comparison}

\subsection{Median Speed-up}

\begin{table}[H]
  \centering
  \caption{Median speed-up ratio over CONCORD method and (standard
    deviation).}
  \label{tbl:mean speedup}
  \begin{tabular}{|c|c||r|r||}
    \hline
    \multirow{2}{*}{$\bm p$}
    & \multirow{2}{*}{$\bm n$}
    & \multicolumn{2}{c||}{\bf Relative to {\ttbf concord}}\\\cline{3-4}
    & 
    & \multicolumn{1}{c|}{\ttbf ccista\_0}
    & \multicolumn{1}{c||}{\ttbf ccfista\_1}\\
  \hline
1000 & 250 &  0.6 ( 0.7) &  0.4 ( 0.3) \\ 
  1000 & 750 &  3.4 ( 1.8) &  1.9 ( 0.9) \\ 
  1000 & 1250 & 23.1 ( 5.7) & 12.0 ( 3.5) \\ 
   \hline
3000 & 750 &  2.7 ( 2.1) &  1.9 ( 1.6) \\ 
  3000 & 2250 & 12.8 ( 1.6) &  8.8 ( 2.2) \\ 
  3000 & 3750 & 81.9 ( 6.6) & 58.2 ( 8.7) \\ 
   \hline
5000 & 1250 &  5.6 ( 3.2) &  3.0 ( 1.8) \\ 
  5000 & 3750 & 21.1 ( 2.6) & 13.5 ( 2.6) \\ 
  5000 & 6250 & 145.8 ( 6.6) & 110.1 (16.4) \\ 
   \hline

\end{tabular}
\end{table}

\subsection{Comparison among CONCORD-ISTA and CONCORD-FISTA variations}

\begin{figure}[H]
  \centering
  \begin{subfigure}[b]{\textwidth}
\begin{knitrout}
\definecolor{shadecolor}{rgb}{0.969, 0.969, 0.969}\color{fgcolor}
\includegraphics[width=\maxwidth]{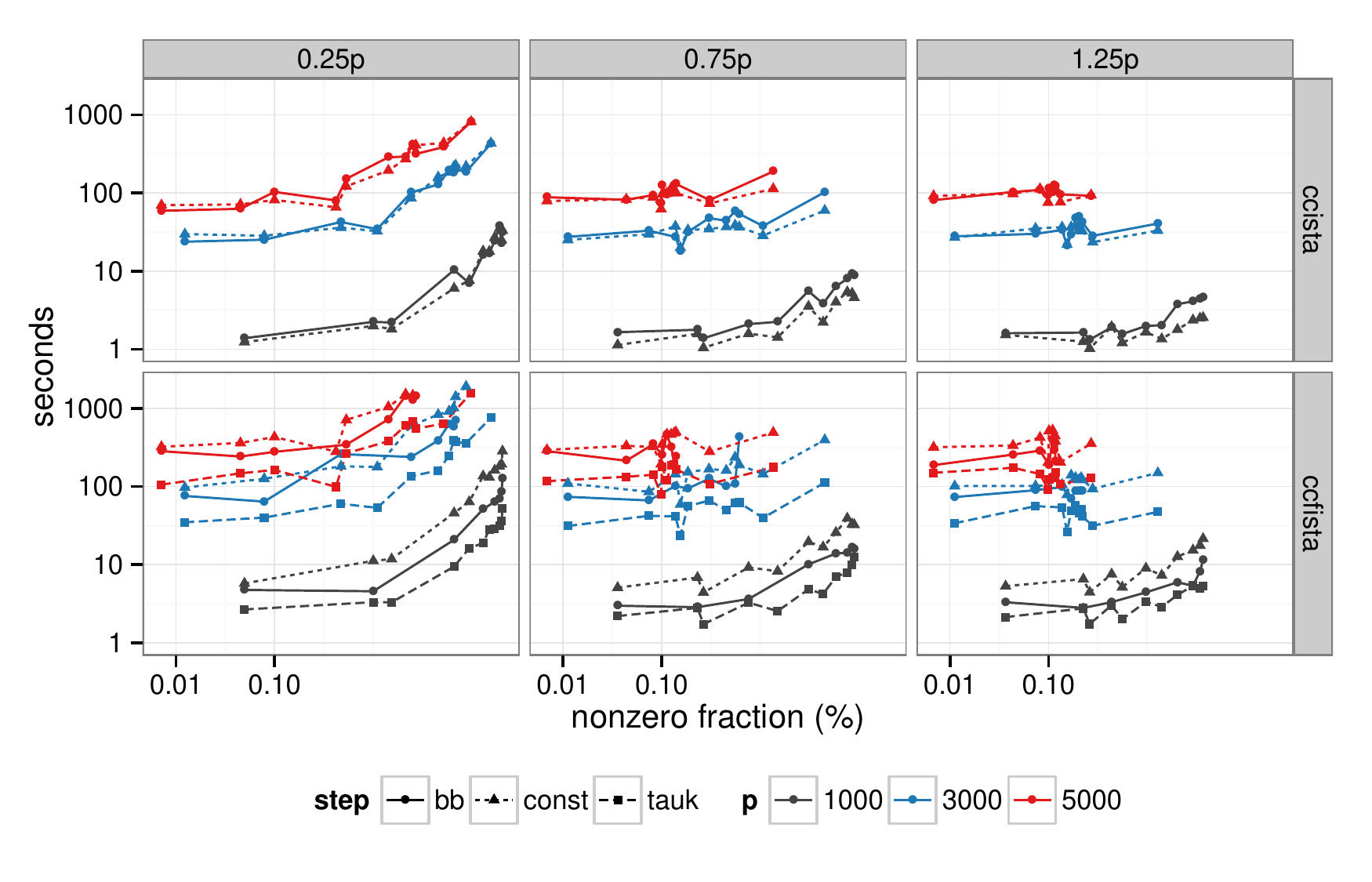} 

\end{knitrout}
\end{subfigure}
  \caption{Timings of CONCORD-ISTA (top) and CONCORD-FISTA (bottom) variations for
    sample sizes $n = \{0.25p, 0.75p, 1.25p\}$}
  \label{fig:compare-ccista-ccfista}
\end{figure}

\subsection{Comparison with CONCORD algorithm}

\begin{figure}[H]
  \centering
  \begin{subfigure}[b]{\textwidth}
\begin{knitrout}
\definecolor{shadecolor}{rgb}{0.969, 0.969, 0.969}\color{fgcolor}
\includegraphics[width=\maxwidth]{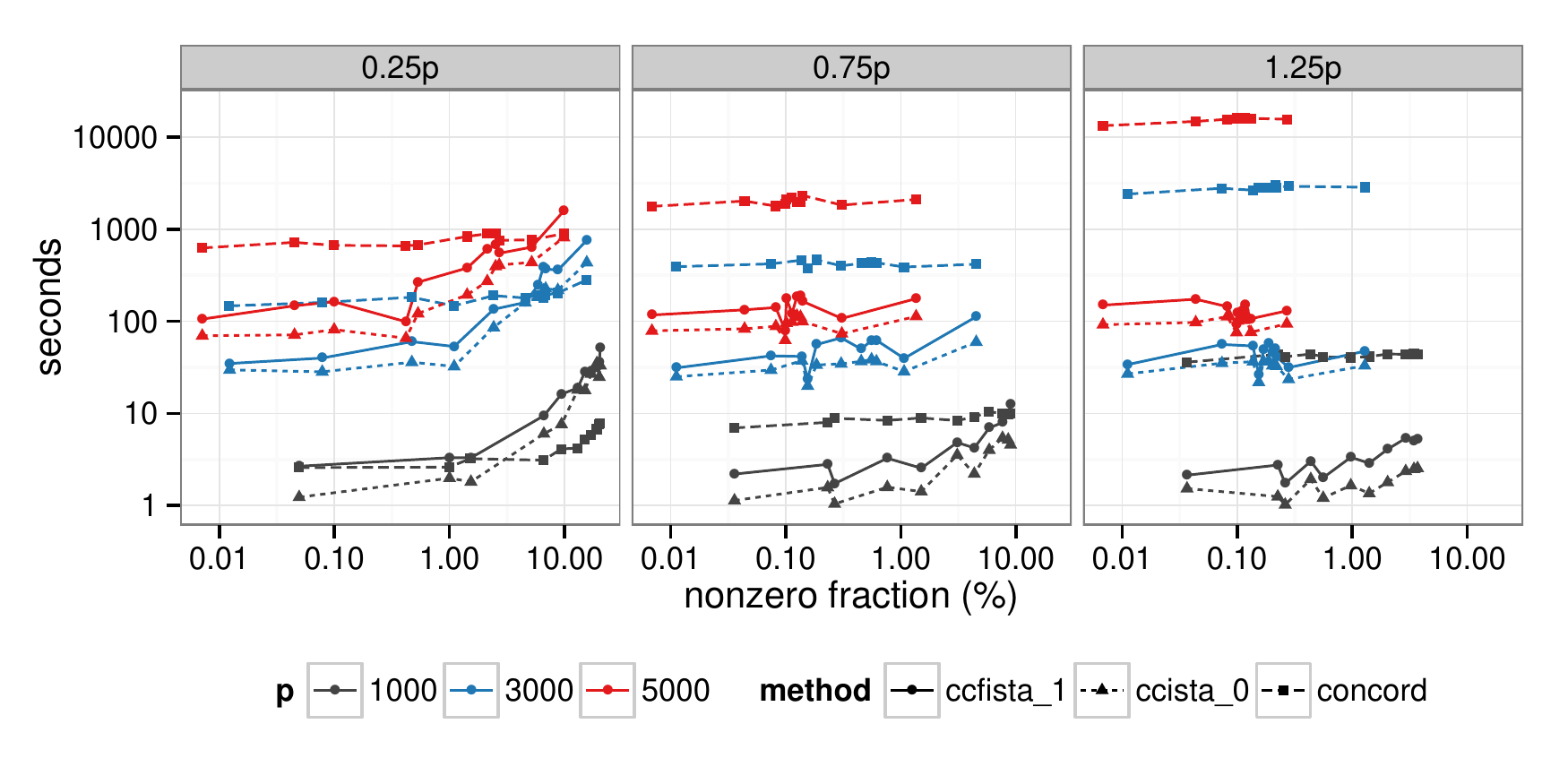} 

\end{knitrout}
\end{subfigure}
\caption{Timing of best CONCORD-ISTA and CONCORD-FISTA variations against CONCORD
  for sample sizes $n = \{0.25p, 0.75p, 1.25p\}$.}
  \label{fig:compare-ccistafista_all}
\end{figure}

\subsection{Running times}

\begin{table}[H]
  \centering
  \caption{$p=1000$, true non-zero fraction (nzf) of 1\%}
  \label{tab:p1000}
  \begin{tabular}{|c|c|c||c|r||rr||rr||rr||}\hline
    &\multirow{2}{*}{$p$} & 
    \multirow{2}{*}{$n$} & 
    \multirow{2}{*}{$\lambda$} & 
    \multirow{2}{*}{nzf (\%)} & 
    \multicolumn{2}{c||}{\bf concord} & 
    \multicolumn{2}{c||}{\bf ccista\_0} & 
    \multicolumn{2}{c||}{\bf ccfista\_1} \\\cline{6-11}
    & & & & & \bf iter & \bf seconds & \bf iter & \bf seconds & \bf iter & \bf seconds\\
  \hline
  7 & 1000 & 250 & 0.163 & 0.99 & 9 & 2.61 & 18 & 1.98 & 26 & 3.31 \\ 
  8 & 1000 & 250 & 0.300 & 0.05 & 9 & 2.58 & 15 & 1.23 & 23 & 2.67 \\ 
  9 & 1000 & 750 & 0.058 & 8.99 & 10 & 9.96 & 20 & 4.56 & 28 & 12.44 \\ 
  10 & 1000 & 750 & 0.059 & 8.56 & 10 & 9.86 & 20 & 5.19 & 28 & 9.86 \\ 
  11 & 1000 & 750 & 0.061 & 7.64 & 10 & 9.97 & 20 & 5.41 & 28 & 7.96 \\ 
  12 & 1000 & 750 & 0.066 & 5.86 & 10 & 10.45 & 20 & 4.01 & 27 & 6.96 \\ 
  13 & 1000 & 750 & 0.077 & 3.09 & 9 & 8.37 & 16 & 3.53 & 25 & 4.84 \\ 
  14 & 1000 & 750 & 0.103 & 0.76 & 9 & 8.40 & 15 & 1.58 & 24 & 3.26 \\ 
  15 & 1000 & 750 & 0.163 & 0.23 & 9 & 8.00 & 15 & 1.57 & 24 & 2.80 \\ 
  16 & 1000 & 750 & 0.300 & 0.04 & 8 & 6.96 & 13 & 1.13 & 18 & 2.20 \\ 
  17 & 1000 & 1250 & 0.058 & 3.69 & 9 & 44.21 & 15 & 2.54 & 24 & 5.29 \\ 
  18 & 1000 & 1250 & 0.059 & 3.43 & 9 & 44.25 & 16 & 2.49 & 24 & 5.00 \\ 
  19 & 1000 & 1250 & 0.061 & 2.91 & 9 & 43.84 & 16 & 2.36 & 24 & 5.38 \\ 
  20 & 1000 & 1250 & 0.066 & 2.03 & 9 & 44.15 & 14 & 1.79 & 24 & 4.09 \\ 
  21 & 1000 & 1250 & 0.077 & 0.97 & 9 & 40.50 & 15 & 1.65 & 24 & 3.34 \\ 
  22 & 1000 & 1250 & 0.103 & 0.44 & 9 & 44.16 & 15 & 1.93 & 24 & 3.02 \\ 
  23 & 1000 & 1250 & 0.163 & 0.23 & 9 & 43.84 & 13 & 1.25 & 23 & 2.75 \\ 
  24 & 1000 & 1250 & 0.300 & 0.04 & 8 & 35.99 & 13 & 1.53 & 17 & 2.13 \\ 
   \hline

  \end{tabular}
\end{table}

\begin{table}[H]
  \centering
  \caption{$p=3000$, true non-zero fraction (nzf) of 0.33\%}
  \label{tab:p3000}
  \begin{tabular}{|c|c|c||c|r||rr||rr||rr||}\hline
    &\multirow{2}{*}{$p$} & 
    \multirow{2}{*}{$n$} & 
    \multirow{2}{*}{$\lambda$} & 
    \multirow{2}{*}{nzf (\%)} & 
    \multicolumn{2}{c||}{\bf concord} & 
    \multicolumn{2}{c||}{\bf ccista\_0} & 
    \multicolumn{2}{c||}{\bf ccfista\_1} \\\cline{6-11}
    & & & & & \bf iter & \bf seconds & \bf iter & \bf seconds & \bf iter & \bf seconds\\
  \hline
  29 & 3000 & 750 & 0.077 & 2.42 & 18 & 190.00 & 36 & 85.81 & 31 & 135.13 \\ 
  30 & 3000 & 750 & 0.103 & 0.47 & 17 & 182.36 & 28 & 36.00 & 35 & 60.13 \\ 
  31 & 3000 & 750 & 0.163 & 0.08 & 16 & 160.13 & 28 & 28.29 & 26 & 39.94 \\ 
  32 & 3000 & 750 & 0.300 & 0.01 & 15 & 147.07 & 25 & 29.67 & 23 & 34.80 \\ 
  33 & 3000 & 2250 & 0.058 & 0.61 & 16 & 433.05 & 27 & 36.63 & 26 & 62.26 \\ 
  34 & 3000 & 2250 & 0.059 & 0.56 & 16 & 434.96 & 28 & 38.50 & 26 & 61.90 \\ 
  35 & 3000 & 2250 & 0.061 & 0.45 & 16 & 425.58 & 28 & 36.75 & 26 & 50.02 \\ 
  36 & 3000 & 2250 & 0.066 & 0.30 & 16 & 400.08 & 28 & 34.55 & 34 & 66.10 \\ 
  37 & 3000 & 2250 & 0.077 & 0.19 & 16 & 464.53 & 28 & 33.57 & 32 & 55.90 \\ 
  38 & 3000 & 2250 & 0.103 & 0.14 & 16 & 462.08 & 28 & 37.39 & 24 & 41.50 \\ 
  39 & 3000 & 2250 & 0.163 & 0.07 & 15 & 420.28 & 26 & 29.57 & 25 & 42.17 \\ 
  40 & 3000 & 2250 & 0.300 & 0.01 & 14 & 391.94 & 22 & 25.06 & 22 & 31.20 \\ 
  41 & 3000 & 3750 & 0.058 & 0.22 & 16 & 2837.71 & 27 & 32.61 & 24 & 41.36 \\ 
  42 & 3000 & 3750 & 0.059 & 0.21 & 16 & 2993.98 & 27 & 33.59 & 24 & 50.58 \\ 
  43 & 3000 & 3750 & 0.061 & 0.20 & 16 & 2826.17 & 27 & 33.06 & 24 & 45.75 \\ 
  44 & 3000 & 3750 & 0.066 & 0.19 & 16 & 2805.85 & 27 & 36.94 & 31 & 57.06 \\ 
  45 & 3000 & 3750 & 0.077 & 0.17 & 15 & 2792.55 & 26 & 36.61 & 31 & 48.96 \\ 
  46 & 3000 & 3750 & 0.103 & 0.14 & 15 & 2649.75 & 26 & 36.43 & 31 & 53.95 \\ 
  47 & 3000 & 3750 & 0.163 & 0.07 & 15 & 2780.53 & 25 & 35.12 & 32 & 56.06 \\ 
  48 & 3000 & 3750 & 0.300 & 0.01 & 13 & 2406.49 & 22 & 26.91 & 22 & 33.90 \\ 
   \hline
  \end{tabular}
  \vspace{0.5cm}
  \caption{$p=5000$, true non-zero fraction (nzf) of 0.20\%}
  \label{tab:p5000}
  \begin{tabular}{|c|c|c||c|r||rr||rr||rr||}\hline
    &\multirow{2}{*}{$p$} & 
    \multirow{2}{*}{$n$} & 
    \multirow{2}{*}{$\lambda$} & 
    \multirow{2}{*}{nzf (\%)} & 
    \multicolumn{2}{c||}{\bf concord} & 
    \multicolumn{2}{c||}{\bf ccista\_0} & 
    \multicolumn{2}{c||}{\bf ccfista\_1} \\\cline{6-11}
    & & & & & \bf iter & \bf seconds & \bf iter & \bf seconds & \bf iter & \bf seconds\\
  \hline
49 & 5000 & 1250 & 0.058 & 2.71 & 18 & 757.67 & 38 & 408.49 & 40 & 547.93 \\ 
  50 & 5000 & 1250 & 0.059 & 2.52 & 18 & 903.05 & 37 & 393.77 & 40 & 681.49 \\ 
  51 & 5000 & 1250 & 0.061 & 2.13 & 18 & 892.30 & 36 & 272.03 & 40 & 604.35 \\ 
  52 & 5000 & 1250 & 0.066 & 1.42 & 17 & 832.68 & 32 & 193.88 & 37 & 379.23 \\ 
  53 & 5000 & 1250 & 0.077 & 0.53 & 17 & 674.71 & 30 & 121.39 & 35 & 265.84 \\ 
  54 & 5000 & 1250 & 0.103 & 0.10 & 17 & 667.62 & 27 & 81.21 & 33 & 163.00 \\ 
  55 & 5000 & 1250 & 0.163 & 0.05 & 16 & 719.81 & 25 & 71.23 & 34 & 147.53 \\ 
  56 & 5000 & 1250 & 0.300 & 0.01 & 14 & 626.20 & 25 & 69.71 & 30 & 105.65 \\ 
  57 & 5000 & 3750 & 0.058 & 0.14 & 17 & 2324.54 & 29 & 99.50 & 35 & 165.12 \\ 
  58 & 5000 & 3750 & 0.059 & 0.13 & 17 & 1965.36 & 29 & 111.53 & 35 & 189.05 \\ 
  59 & 5000 & 3750 & 0.061 & 0.13 & 17 & 1967.39 & 29 & 114.72 & 35 & 186.34 \\ 
  60 & 5000 & 3750 & 0.066 & 0.11 & 17 & 2183.90 & 29 & 98.54 & 25 & 121.39 \\ 
  61 & 5000 & 3750 & 0.077 & 0.10 & 17 & 2094.73 & 29 & 95.84 & 33 & 178.13 \\ 
  62 & 5000 & 3750 & 0.103 & 0.08 & 16 & 1780.97 & 26 & 88.29 & 32 & 141.14 \\ 
  63 & 5000 & 3750 & 0.163 & 0.04 & 16 & 2021.49 & 25 & 82.88 & 33 & 133.36 \\ 
  64 & 5000 & 3750 & 0.300 & 0.01 & 14 & 1767.63 & 24 & 78.77 & 30 & 117.03 \\ 
  65 & 5000 & 6250 & 0.058 & 0.12 & 17 & 15698.02 & 27 & 113.65 & 25 & 150.95 \\ 
  66 & 5000 & 6250 & 0.059 & 0.12 & 17 & 16221.44 & 27 & 115.35 & 25 & 130.19 \\ 
  67 & 5000 & 6250 & 0.061 & 0.11 & 17 & 15698.53 & 27 & 103.06 & 25 & 132.57 \\ 
  68 & 5000 & 6250 & 0.066 & 0.11 & 17 & 16220.33 & 27 & 111.75 & 25 & 129.70 \\ 
  69 & 5000 & 6250 & 0.077 & 0.10 & 17 & 15671.14 & 27 & 101.03 & 25 & 123.92 \\ 
  70 & 5000 & 6250 & 0.103 & 0.08 & 17 & 15600.83 & 26 & 112.48 & 33 & 144.42 \\ 
  71 & 5000 & 6250 & 0.163 & 0.04 & 16 & 14787.78 & 26 & 97.33 & 34 & 173.66 \\ 
  72 & 5000 & 6250 & 0.300 & 0.01 & 14 & 13287.76 & 24 & 91.84 & 30 & 149.70 \\ 
   \hline

  \end{tabular}
\end{table}

\subsection{Other Methods}
\label{othermethods}
\subsubsection{Dual problem of CONCORD}
Formulating the dual using the matrix form is challenging since the KKT conditions involving the gradient term $S\Omega + \Omega S$ do not have a closed form solution as in the case of Gaussian problem in \cite{Dalal2014}. Therefore, we consider a vector form of the CONCORD problem by defining two new variables ${x_1} \in \mathbb{R}^{p}$ and ${x_2} \in \mathbb{R}^{{p(p-1)/2}}$ as
\begin{alignat}{1}
\label{variables}
{x_1} &= (\omega_{11}, \omega_{22}, \ldots , \omega_{pp})^{T} \nonumber \\
{x_2} &= (\omega_{12}, \omega_{13}, \ldots , \omega_{1p}, \omega_{23}, \ldots , \omega_{2p}, \ldots , \omega_{p-1p})^{T}.
\end{alignat}
We define two coefficient matrices ${A_1}, {A_2}$ as
\begin{alignat}{1}
\label{coefficients}
{A_1} = \displaystyle\left[\begin{array}{cccc} Y_{1} & & & \\ & Y_{2} & & \\ & & \ddots & \\ & & & Y_{p} \end{array}\right],
{A_2} = \displaystyle\left[
\begin{array}{ccccc} Y_{2} & Y_{3} & \cdots & Y_{p}\\ Y_{1} & & & & \\ & Y_{1} & & & \\ & & \ddots &  \\ & & & Y_{1} \end{array} 
\begin{array}{cccc} & & \\ Y_{3} & \cdots & Y_{p} \\ Y_{2} & & \\ & \ddots & \\ & & Y_{2} \end{array}\begin{array}{ccc} && \\ && \\ && \\ &\ddots&  \\ && \end{array} 
\begin{array}{cccc} & \\  &  \\  & \\  Y_{p-1} & Y_{p}\\ Y_{p-2} & \\ & Y_{p-2} \end{array} 
\begin{array}{c} \\  \\  \\  \\ Y_{p}\\ Y_{p-1} \end{array} 
\right],
\end{alignat}
where ${A_1}_{np \times p}$ and ${A_2}_{np \times p(p-1)/2}$ dimensional matrices. Using these definitions, the CONCORD problem (\ref{eq:matrix concord}) can be rewritten as
\begin{alignat}{1}
\label{cctransform}
\minimize_{{x_1}, {x_2}} \ & -n\log{x_1} + \displaystyle \frac{1}{2}\big\|{A_1}{x_1} + {A_2}{x_2}\big\|^{2} + \lambda \|{x_2}\|_{1}.
\end{alignat}
where, $\log({x_1}) = \sum_{i=1}^{i=p} \log({x_1}_{i})$. We will use $x = \left[\begin{array}{c}{x_1}\\ {x_2} \end{array}\right]$ for simplicity of notation where ever possible. 

The transformed CONCORD problem in (\ref{cctransform}) can be written in composite form using a new variable $z = {A_1}x_{1} + {A_2}x_{2}$ as 
\begin{alignat}{1}
\label{cccomposite}
\minimize_{x_{1},x_{2},z} \ & -n\log{x_{1}} + \displaystyle \frac{1}{2}\big\|z\big\|^{2} + \lambda \|x_{2}\|_{1} \nonumber \\
\st \ &{A_1}x_{1} + {A_2}x_{2} = z
\end{alignat}
The Lagrangian for this problem is given by
\begin{alignat}{1}
\label{cclagrange}
{\cal L}({x_1},{x_2},z, y)  &= -n\log{x_1} + \displaystyle \frac{1}{2}\big\|z\big\|^{2} + \lambda \|{x_2}\|_{1} + y^{T}\left({A_1}{x_1} + {A_2}{x_2} - z\right).
\end{alignat}
Maximizing with respect to the three primal variables yields following optimality conditions (the . notation is adapted from MATLAB to denote element-wise operations),
\begin{alignat}{1}
z - y  = 0 \nonumber\\
-n./{x_1} + {A_1}^{T}y  = 0 \nonumber \\
\lambda \mbox{sign}({x_2}) + {A_2}^{T}y \ni 0.
\end{alignat}
Substituting these the dual problem can be written as
\begin{alignat}{1}
\maximize_{y} \ & -n\log{n./{A_1}^{T}y} + \displaystyle \frac{1}{2}\big\|y\big\|^{2} + y^{T}\left({A_1}(n./{A_1}^{T}y) - y\right) \nonumber \\
\st \ &\|{A_2}^{T}y\|_{\infty} \leq \lambda, \nonumber
\end{alignat}
or equivalently
\begin{alignat}{1}
\label{concorddual}
\maximize_{y} \ & \displaystyle \frac{1}{2}\big\|y\big\|^{2} - n\log{({A_1}^{T}y)} + c \nonumber \\
\st \ &\|{A_2}^{T}y\|_{\infty} \leq \lambda,
\end{alignat}
where, $c = n\log{n} - n^{2}$ is a constant. This problem can also be written in composite form as 
\begin{alignat}{1}
\label{concorddual}
\maximize_{y} \ & \displaystyle \frac{1}{2}\big\|y\big\|^{2} - n\log{({A_1}^{T}y)} + \mathbbm{1}_{\|w\|_{\infty} \leq \lambda} \nonumber \\
\st \ & {A_2}^{T}y - w = 0.
\end{alignat}
The gradient and hessian of the smooth function $h(y) =  \frac{1}{2}\big\|y\big\|^{2} - n\log{({A_1}^{T}y)}$ is given by
\begin{alignat}{1}
\nabla{h}(y) &= y - {A_1}(n./{A_1}^{T}y), \nonumber\\
\nabla^{2}{h}(y) &= I + {A_1}\texttt{diag}\left(n./({A_1}^{T}y)^{2}\right){A_1}^{T}.
\end{alignat}
Here, the hessian is bounded away from the semi-definite boundary. Hence the function $h$ is strongly convex with parameter $1$. Moreover, on lines of Theorem \ref{up-low-bound}, we can show that if $y$ is restricted to a convex level set ${\cal C} = \left\{y | h(y) \leq M\right\}$ for some constant $M$, then the function $h$ has a Lipschitz continuous gradient. Note that
\begin{alignat}{1}
- n\log{({A_1}^{T}y)} & \leq {h}(y) \leq M \nonumber\\
e^{-\frac{M}{n}} &\leq {A_1}^{T}y.
\end{alignat}
Therefore, the hessian satisfies 
\begin{alignat}{1}
\nabla^{2}{h}(y) &= I + {A_1}\texttt{diag}\left(n./({A_1}^{T}y)^{2}\right){A_1}^{T} \preceq (1 + n\rho({A_1}^{T}{A_1}) e^{\frac{2M}{n}} )I.
\end{alignat}
To conclude, the dual problem provides an alternate method to prove the ${\cal O}(\frac{1}{k})$ and ${\cal O}(\frac{1}{k^2})$ rates of convergence for CONCORD problem.

\subsubsection{Proximal Newton's Algorithm for CONCORD}
Recall that the hessian of the smooth function $h_1$ as given in \ref{hessian} is
\begin{align}
\nabla^{2}{h_1}(\Omega) &= \sum_{i=1}^{i=p} {\omega_{ii}^{-2} \left[{e_i}{e_i}^{T} \otimes {e_i}{e_i}^{T}\right]} + \frac{1}{2} \left( S \otimes I + I \otimes S \right). \nonumber 
\end{align}
The subproblem solved for the direction of descent for the second order PNOPT algorithm is given by
\begin{alignat}{1}
\displaystyle {\Delta{\Omega}^{(k)}} = \argmin_{W} \la G^{(k)}, {W} \ra +  \frac{1}{2} \sum_{i=1}^{i=p} \omega_{ii}^{-2} \tr \left( W {e_i}{e_i}^{T}  W {e_i}{e_i}^{T} \right) + \tr \left( W S W \right) + \lambda \|{\Omega\X}^{(k)} + W\|_{1}.
\end{alignat}
Using these, the matrix version of the second order algorithm is given in Algorithm \ref{concordmatrix:pnopt}. Here, the subproblem for the descent step is as a huge Lasso problem. This can be solved by standard Lasso packages which uses coordinate descent methods.
\begin{algorithm}[H]
\caption{CONCORD - Proximal Newton Optimization Matrix form (CONCORD-PNOPT)}
\label{concordmatrix:pnopt}
\begin{algorithmic}
\STATE Initialize: $\Omega^{(0)} \in \mathbb{S}^{p}_{+}, \tau_{(0,0)} = 1, \Delta_{\texttt{opt}} = 2\epsilon_{\texttt{opt}}$ and $\Delta_{\texttt{term}} = 2\epsilon_{\texttt{term}}$

\WHILE{$\Delta_{\texttt{subg}} > \epsilon_{\texttt{subg}}$ \textbf{or} $\Delta_{\texttt{term}} > \epsilon_{\texttt{term}}$}  

  \STATE \textit{Compute $\nabla{h_1}$:}

  \STATE \qquad $G^{(k)} = {\Omega\D}^{-1} + \frac{1}{2}\left(S\,\,\Omega{(k)}^{T} + \Omega^{(k)}S\right)$

  \STATE \textit{Compute Newton step:}

  \STATE \qquad $\displaystyle {\Delta{\Omega}^{(k)}} = \argmin_{W} \la G^{(k)}, {W} \ra +  \frac{1}{2} \sum_{i=1}^{i=p} \omega_{ii}^{-2} \tr \left( W {e_i}{e_i}^{T}  W {e_i}{e_i}^{T} \right) + \tr \left( W S W \right) + \lambda \|{\Omega\X}^{(k)} + W\|_{1}$

  \STATE \textit{Compute sufficient descent} $\Delta^{(k)}$:

  \STATE \qquad $\Delta^{(k)} = \la G^{(k)}, {\Delta{\Omega}}^{(k)} \ra + \lambda\left(\| {\Omega\X}^{(k)} + \Delta{\Omega\X}^{(k)}\|_{1} - \| {\Omega\X}^{(k)} \|_{1} \right) $

  \STATE \textit{Compute $\tau_k$, such that $Q_{\con}(\Omega^{(k+1)}) \leq Q_{\con}(\Omega^{(k)}) + \alpha \tau_k \Delta^{(k)}$.}

  \STATE \textit{Update:} ${\Omega}^{(k+1)} = {\Omega}^{(k)} + \tau_{k} {\Delta{\Omega}}^{(k)}$

  \STATE \textit{Compute convergence criteria:} 

  \STATE \qquad $\Delta_{\texttt{subg}} = \displaystyle\frac{\| \nabla{h}(\Omega^{(k)}) + \partial{g}(\Omega^{(k)}) \|}{\|\Omega^{(k)}\|}$, \qquad $\Delta_{\texttt{term}} = \displaystyle\frac{\| f(\Omega^{(k+1)}) - f(\Omega^{(k)}) \|}{\|f(\Omega^{(k)})\|}$
\ENDWHILE
\end{algorithmic}
\end{algorithm}


\end{document}